\newtheorem{Thm}{Theorem}[section]
\newtheorem{theorem}[Thm]{Theorem}
\newtheorem{proposition}[Thm]{Proposition}
\newtheorem{remark}{Remark}[section]
\title{A relation of thermodynamic relevance between the superadditivity, concavity and homogeneity properties of real-valued functions}
\author{Walter F. Wreszinski\footnote{wreszins@gmail.com, 
Instituto de Fisica, Universidade de S\~ao Paulo (USP), Brasil}}        
\begin{document}

\maketitle

\begin{abstract}
We provide a necessary and sufficient condition for the validity of the following Landsberg-Thirring theorem: for a real-valued function on a convex set, any two of the properties of superadditivity, concavity and homogeneity implies the third. Applications to statistical thermodynamics, following Thirring and Landsberg, are briefly revisited. 
\end{abstract}

\section{Introduction and summary}

In \cite{L}, Landsberg studied the issue of whether equilibrium is always an entropy maximum, having in view nonextensive (e.g., gravitational) systems, which has been recently revived under a different point of view in \cite{LYEM}. In the process, he arrived at a connection between the properties of homogeneity, superadditivity and concavity of a real-valued function. Unfortunately, he did not formulate this connection as a theorem. Thirring attempted to do so in his beautiful introduction to Lieb's selecta \cite{ThirrLi} as follows (we present the version used by Landsberg, which replaces subadditivity by superadditivity and convexity by concavity): 

\begin{proposition}
\label{prop:T}
Let $x \to f(x)$ be a map from a convex set of $\mathbf{R}^{d}$ into $\mathbf{R}$. Then any two of the conditions
\begin{itemize}
\item [$a.)$] (H) (Homogeneity) $f(\lambda x) = \lambda f(x) \mbox{ for all } \lambda \in \mathbf{R}_{+}$;
\item [$b.)$] (Sp) (Superadditivity) $f(x_{1}+x_{2}) \ge f(x_{1}) + f(x_{2})$;
\item [$c.)$] (Cc) (Concavity) $f[\lambda x_{1}+ (1-\lambda) x_{2}] \ge \lambda f(x_{1}) + (1-\lambda) f(x_{2})$ for $0\le \lambda \le 1$.
\end{itemize}
implies the third.
\end{proposition}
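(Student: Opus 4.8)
The plan is to establish the three separate implications (H)$\wedge$(Sp)$\Rightarrow$(Cc), (H)$\wedge$(Cc)$\Rightarrow$(Sp), and (Sp)$\wedge$(Cc)$\Rightarrow$(H), since ``any two imply the third'' unpacks into exactly these. I expect the first two to be routine manipulations, while the third will be the genuine obstacle and, I suspect, the place where the necessary and sufficient condition promised in the abstract must enter.

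For (H)$\wedge$(Sp)$\Rightarrow$(Cc) I would fix $x_1,x_2$ and $\lambda\in(0,1)$ (the endpoints being trivial), apply superadditivity to the two points $\lambda x_1$ and $(1-\lambda)x_2$ to get $f(\lambda x_1+(1-\lambda)x_2)\ge f(\lambda x_1)+f((1-\lambda)x_2)$, and then rewrite each term by homogeneity, valid since $\lambda,1-\lambda\in\mathbf{R}_+$, as $\lambda f(x_1)$ and $(1-\lambda)f(x_2)$; this is precisely (Cc). For (H)$\wedge$(Cc)$\Rightarrow$(Sp) I would write $x_1+x_2=2\big(\tfrac12 x_1+\tfrac12 x_2\big)$, use homogeneity with $\lambda=2$ to extract the factor $2$, and apply midpoint concavity ($\lambda=\tfrac12$) to obtain $f\big(\tfrac12 x_1+\tfrac12 x_2\big)\ge \tfrac12 f(x_1)+\tfrac12 f(x_2)$; multiplying by $2$ gives (Sp). Neither of these two uses any hypothesis on $f(0)$.

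The hard direction is (Sp)$\wedge$(Cc)$\Rightarrow$(H). Here I would first note that superadditivity applied to $x_1=x_2=0$ yields $f(0)\ge 2f(0)$, i.e. $f(0)\le 0$, but nothing stronger. Assuming $0$ lies in the convex domain, the strategy would be a squeeze: concavity against the point $0$ gives, for $\lambda\in[0,1]$, $f(\lambda x)\ge \lambda f(x)+(1-\lambda)f(0)$, while superadditivity applied to the splitting $x=\lambda x+(1-\lambda)x$ gives $f(x)\ge f(\lambda x)+f((1-\lambda)x)$. Feeding the first inequality (for both $\lambda$ and $1-\lambda$) into the second should force a chain of equalities, hence $f(\lambda x)=\lambda f(x)$ on $[0,1]$, after which the reciprocal substitution $x\mapsto \lambda x$, $\lambda\mapsto 1/\lambda$ extends homogeneity to all of $\mathbf{R}_+$.

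The obstacle is that this squeeze only closes if the $f(0)$ contributions vanish, i.e. if $f(0)=0$; with merely $f(0)\le 0$ the lower bound from concavity is too weak. Indeed I expect the affine example $f(x)=x+c$ with $c<0$ on $\mathbf{R}$ to be superadditive and concave yet not homogeneous, so that the third implication fails as literally stated. I therefore anticipate that the correct formulation carries the side condition $f(0)=0$ (equivalently, given $f(0)\le 0$, the requirement $f(0)\ge 0$), and that this is exactly the necessary and sufficient condition announced in the abstract; its necessity is immediate, since homogeneity applied to $x=0$ already gives $f(0)=\lambda f(0)$ for all $\lambda$, forcing $f(0)=0$.
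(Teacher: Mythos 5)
Your instinct is exactly right, and it is in fact the main point of the paper: Proposition~\ref{prop:T} is \emph{not} proved there but refuted, and the corrected statement (Theorem~\ref{th:2.1}) carries precisely the side condition you anticipated, in the form $\liminf_{x\to(0,\dots,0)}f(x)\ge 0$ on an open convex cone $X$ with $0\in\overline{X}\setminus X$; when $0$ lies in the domain and (Sp) holds, this reduces to your $f(0)=0$. Your two easy implications are correct and are taken for granted in the paper, which reduces everything to (Cc)$\land$(Sp)$\Rightarrow$(H). You differ from the paper in both halves of the remaining work. For the refutation, your affine counterexample $f(x)=x+c$ with $c<0$ is valid and considerably simpler than the paper's, which glues $\log(cx)$ near the origin to its tangent line at $x=2/c$; the paper's choice is dictated by its framework (the origin is excluded from the open cone, so the obstruction must be a bad $\liminf$ rather than a bad value $f(0)$) and by the physical motivation (the classical entropy is unbounded below). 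For sufficiency, your squeeze --- feeding $f(\lambda x)\ge\lambda f(x)$ and $f((1-\lambda)x)\ge(1-\lambda)f(x)$ into $f(x)\ge f(\lambda x)+f((1-\lambda)x)$ to force equality for all $\lambda\in(0,1)$ at once, then inverting to reach $\lambda>1$ --- is cleaner than the paper's route, which establishes only $f(w)=nf(w/n)$ for integers, extends to rationals, and then passes to irrational $\lambda$ via continuity and continued-fraction approximants. The one repair your argument needs in the paper's setting is that you cannot evaluate $f$ at $0$: the inequality $f(\lambda x)\ge\lambda f(x)$ must instead be obtained by letting $y\to 0$ in (Cc), using the continuity of a concave function on an open set together with the $\liminf$ hypothesis, which is exactly how the paper derives its inequality \eqref{(2.3)}; after that your squeeze goes through verbatim and would in fact shorten the paper's proof.
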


The formulae $a.)$, $b.)$, $c.$ are equivalent to the following \cite{ThirrLi}:
\begin{itemize}
\item [$a.)$] (H) (Homogeneity) $f(\lambda x) = \lambda f(x) \mbox{ for all } \lambda \in \mathbf{R}_{+}$;
\item [$b.)$] (S) (Subadditivity) $f(x_{1}+x_{2}) \le f(x_{1}) + f(x_{2})$;
\item [$c.)$] (Cv) (Convexity) $f[\lambda x_{1}+ (1-\lambda) x_{2}] \le \lambda f(x_{1}) + (1-\lambda) f(x_{2})$ for $0\le \lambda \le 1$.
\end{itemize}

Moreover \cite{ThirrLi} (H) and (S) are conditions for stability agains implosion and explosion, respectively, and (C) is the thermodynamic stability condition.

Proposition ~\ref{prop:T} cannot be true as stated, due to the counterexample given in the proof of the forthcoming theorem ~\ref{th:2.1} in section 2. The additional assumption required there (\eqref{(2.1.2)} of section 2, together with Assumption A) throws some additional light into the proposed relationship between (H), (Sp) and (Cc), because it yields a necessary and sufficient condition for the validity of a theorem of type of theorem ~\ref{prop:T}. Its physical significance will be left to the conclusion in section 4, after the applications to statistical thermodynamics, due to Thirring \cite{ThirrLi} and Landsberg \cite{L}, have been briefly revisited in section 3.

The main ideas in the proof of the forthcoming theorem, whose statement and proof are provided in pages 3-6, are due to the late Peter Landsberg and Walter Thirring, and therefore we call it the Landsberg-Thirring theorem. Our own modest contribution was to find (what we believe is) the natural framework for a theorem, which may, however, be of general interest in real analysis, because, on the one hand, it relates three basic properties of real-valued functions, and, on the other, there seem to be few necessary and sufficient criteria for super (sub) additivity, since the classic works (\cite{HP}, \cite{Ros})- see, however, \cite{Bru}. Possible generalizations to a noncommutative setting may also be envisaged \cite{UUG}, for an introduction see also section 8.1 of \cite{Carlen} and references given there. 

\section{Main Theorem}

The function $f_{0}$ in the counterexample to proposition ~\ref{prop:T} given in the proof of theorem ~\ref{th:2.1} below is defined on a convex \emph{open} set $X$ and exhibits a singularity (of the second kind) at a point (chosen without loss of generality as the origin of the Cartesian coordinate system). This motivates the introduction of the following simple framework. 

\emph{Assumption A} Let $X$ be a convex open cone in $\mathbf{R}^{d}$, $1\le d <\infty$, and $\overline{X}$ denote its closure in $\mathbf{R}^{d}$. Thus, $\overline{X}$ is a closed convex cone (see, e.g., \cite{Fenchel}), and 
\begin{equation} 
(0, \cdots, 0) \in \overline{X} \setminus X
\label{(2.1.1)}
\end{equation}
By definition, $X$ is closed under addition and multiplication by a scalar in $\mathbf{R}_{+}$.
\space
\begin{theorem}
\label{th:2.1}
Let $X$ be as in Assumption A, and $f$ be a real-valued function on $X$.
A necessary and sufficient condition for the statement that any two of the properties (H), (Sp) and (Cc) for $f$ imply the third is
\begin{equation}
\liminf_{(x_{1},\cdots,x_{d}) \to (0,\cdots,0)} f(x_{1}, \cdots, x_{d}) \ge 0
\label{(2.1.2)}
\end{equation}
\end{theorem}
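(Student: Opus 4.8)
The plan is to split the biconditional into a sufficiency part, which I expect to carry the real-analytic content, and a necessity part, which I expect to follow from an explicit singular example. I would first dispose of the two implications that require no hypothesis at all. Assuming (H) and (Cc), for any $x_{1},x_{2}\in X$ I would write $x_{1}+x_{2}=2\cdot\tfrac{x_{1}+x_{2}}{2}$, apply (H) to extract the factor $2$, and then apply (Cc) with $\lambda=\tfrac12$ to recover (Sp); assuming (H) and (Sp), I would split $\lambda x_{1}+(1-\lambda)x_{2}$ by superadditivity and rescale each summand by homogeneity to recover (Cc). Neither argument goes near the origin, so \eqref{(2.1.2)} plays no role here. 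The whole weight of the theorem therefore rests on the remaining implication, (Sp) and (Cc) $\Rightarrow$ (H), and it is precisely there that \eqref{(2.1.2)} must enter.

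For that implication I would reduce to one dimension. Fixing $x\in X$ and setting $g(\lambda):=f(\lambda x)$ for $\lambda>0$ (legitimate since $X$ is a cone), the restriction $g$ inherits concavity and superadditivity on $(0,\infty)$ from (Cc) and (Sp), while radial approach to the origin gives $\liminf_{\lambda\to0^{+}}g(\lambda)\ge\liminf_{y\to0}f(y)\ge0$ by \eqref{(2.1.2)}. The goal then becomes the one-variable lemma that a concave, superadditive $g$ on $(0,\infty)$ with $\liminf_{\lambda\to0^{+}}g(\lambda)\ge0$ is necessarily linear, $g(\lambda)=\lambda g(1)$; this yields $f(\lambda x)=\lambda f(x)$ and hence (H), since $x$ is arbitrary.

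The crux, and the step I expect to be the main obstacle, is the boundary analysis at $0$ inside this lemma. Concavity on the open interval gives continuity and a well-defined (extended-real) limit $L:=\lim_{\lambda\to0^{+}}g(\lambda)$; hypothesis \eqref{(2.1.2)} forces $L\ge0$, while superadditivity, applied in the form $g(\lambda)-g(\lambda-\mu)\ge g(\mu)$ and letting $\mu\to0^{+}$, forces $L\le0$. Pinning $L=0$ is exactly what fails in the singular counterexample, and is the only place the hypothesis is used. Once $g$ is extended by $g(0)=0$ to a concave function on $[0,\infty)$, I would combine the chord inequality $g(t\lambda)\ge t\,g(\lambda)$ (concavity measured from the origin) with the reverse inequality coming from superadditivity, namely $g(\lambda)=g\bigl(t\lambda+(1-t)\lambda\bigr)\ge g(t\lambda)+g((1-t)\lambda)\ge g(\lambda)$, to force equality $g(t\lambda)=t\,g(\lambda)$ for every $t\in(0,1)$; hence $g(\lambda)/\lambda$ is constant and $g$ is linear.

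Finally, for necessity I would exhibit a function violating \eqref{(2.1.2)} for which the Landsberg--Thirring implication collapses, showing the hypothesis cannot be omitted. The natural candidate is a second-kind singularity at the origin such as $f_{0}(x)=-1/\ell(x)$ for a positive linear functional $\ell$ on $X$ (for instance $\ell(x)=x_{1}+\cdots+x_{d}$ on the positive orthant): it is concave, being the composition of the concave increasing map $t\mapsto-1/t$ with a linear map; it is superadditive, since $1/\ell(x_{1})+1/\ell(x_{2})\ge 1/\bigl(\ell(x_{1})+\ell(x_{2})\bigr)$; yet it is homogeneous of degree $-1$ rather than $+1$, so it satisfies (Sp) and (Cc) but not (H). Because $\liminf_{y\to0}f_{0}(y)=-\infty<0$, this $f_{0}$ is exactly the obstruction that \eqref{(2.1.2)} is designed to exclude, which establishes that the condition is both the right sufficient hypothesis and one that cannot be dropped.
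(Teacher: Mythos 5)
Your proposal is correct, and in two places it takes a genuinely different (and arguably leaner) route than the paper. For sufficiency, the paper works directly in $\mathbf{R}^{d}$: it invokes continuity of the concave $f$ on the open set $X$, lets $y\to 0$ in the concavity inequality to get $f(\lambda x)\ge\lambda f(x)$ for $\lambda\in[0,1]$, combines this with $f(nx)\ge nf(x)$ from (Sp) to obtain $f(w)=nf(w/n)$, and then climbs from integers to rationals to irrationals via continued-fraction approximants and continuity. Your ray-restriction $g(\lambda)=f(\lambda x)$ reaches the same chord inequality $g(t\lambda)\ge t\,g(\lambda)$, but your squeeze $g(\lambda)\ge g(t\lambda)+g((1-t)\lambda)\ge t\,g(\lambda)+(1-t)\,g(\lambda)=g(\lambda)$ forces $g(t\lambda)=t\,g(\lambda)$ for \emph{every real} $t\in(0,1)$ at once, eliminating the density-plus-continuity step entirely; your additional observation that superadditivity forces the boundary limit $L\le 0$ (hence $L=0$) is not in the paper and is not strictly needed for the chord inequality (which only uses $L\ge 0$), but it does no harm and cleanly justifies the concave extension $g(0)=0$. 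For necessity, the paper glues $\log(cx)$ to its tangent line at $x=2/c$ and verifies superadditivity by a case analysis; your $f_{0}(x)=-1/\ell(x)$, homogeneous of degree $-1$, admits one-line verifications of (Cc), (Sp), non-(H) and $\liminf f_{0}=-\infty$, and generalizes to any $X$ in Assumption A via a separating functional, so it serves the same purpose more economically. Note that both you and the paper establish ``necessity'' in the same sense, namely by exhibiting a single function violating \eqref{(2.1.2)} for which $(Cc)\land(Sp)\Rightarrow(H)$ fails; with that reading your argument matches the theorem as intended.
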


\begin{proof}

Let Assumption A be valid. We need only show that 
\begin{equation}
(Cc) \land (Sp) \Rightarrow (H)
\label{(2.1.3)}
\end{equation}
if and only if \eqref{(2.1.2)} holds. We first show necessity. Let $d=1$, $X= \mathbf{R}_{+}=(0,\infty)$ and
\begin{equation}
0<c< \infty
\label{(2.1.4)}
\end{equation}
be given and define
\begin{equation}
\label{(2.1.5)}
h(x)=\begin{cases}\log(cx),& \mbox{ if } 0<x\le\frac{2}{c}\\0, \mbox{ otherwise } \end{cases}
\end{equation}
and
\begin{equation}
\label{(2.1.6)}
g(x)=\begin{cases} \log(2)+(\frac{c}{2})^{2}(x-\frac{2}{c}),& \mbox{ if } \frac{2}{c}\le x< \infty\\0, \mbox{ otherwise } \end{cases}
\end{equation}
Notice that we have chosen the angular coefficient of the straight line equal to the tangent to the logarithmic function
at the point $\frac{2}{c}$.
Further, define the function on $(0,\infty)$:
\begin{equation}
\label{(2.1.7)}
f_{0}(x) = h(x) + g(x)
\end{equation}
By \eqref{(2.1.5)} and \eqref{(2.1.6)}, $f_{0}$ is continuous, and has a continuous derivative at the point $\frac{2}{c}$.
The function $h$ is superadditive on $(0,\frac{2}{c}]$ because
\begin{equation}
\label{(2.1.8)}
\log[(c(x_{1}+x_{2})] \ge \log(cx_{1})+\log(cx_{2})=\log(c^{2}x_{1}x_{2})
\end{equation}
is true whenever 
$$
x_{1}+x_{2} \ge cx_{1}x_{2}
$$
or
\begin{equation}
\label{(2.1.9)}
\frac{x_{1}+x_{2}}{x_{1}x_{2}} \ge c
\end{equation}
Superadditivity of $h$ on $(0,\frac{2}{c}]$ means, by definition \eqref{(2.1.5)}, that \eqref{(2.1.8)} holds 
for all $x_{1},x_{2} \in (0,\frac{2}{c}]$ such that $x_{1}+x_{2}$ is also an element of $(0,\frac{2}{c}]$, i.e., such that
\begin{equation}
\label{(2.1.10)}
0<x_{1}+x_{2}\le \frac{2}{c}
\end{equation}
By \eqref{(2.1.9)}, \eqref{(2.1.8)} holds under \eqref{(2.1.10)} due to the elementary inequalities
\begin{equation}
\label{(2.1.11)}
\frac{1}{x_{1}} + \frac{1}{x_{2}} \ge \frac{2}{x_{1}^{1/2}x_{2}^{1/2}} \ge c
\end{equation}
We now consider the remaining case 
\begin{equation}
\label{(2.1.12)}
x_{1}+x_{2} > \frac{2}{c}
\end{equation}
In case \eqref{(2.1.12)}, we may have two different cases:
\begin{itemize}
\item [$a.)$] $x_{1} \le \frac{2}{c}$ and $x_{2}>\frac{2}{c}$;
\item [$b.)$] $x_{1}>\frac{2}{c}$ and $x_{2}> \frac{2}{c}$
\end{itemize}
Of course, the case a.) with $x_{1}$ and $x_{2}$ exchanged is the same. 
In case a.),
$$
f_{0}(x_{1}+x_{2}) = g(x_{1}+x_{2}) = \tilde{g}(x_{1})+g(x_{2}) \ge h(x_{1})+g(x_{2})=f_{0}(x_{1})+f_{0}(x_{2})
$$
where $\tilde{g}$ denotes the natural extension of $g$ to $(0,\infty)$, by the remark after equation \eqref{(2.1.6)}.
In case b.), 
$$
f_{0}(x_{1}+x_{2}) = g(x_{1}+x_{2}) = g(x_{1})+g(x_{2})=f_{0}(x_{1})+f_{0}(x_{2})
$$
which completes the proof of superadditivity of the function $f_{0}$.

The function $h$, defined by \eqref{(2.1.5)}, satisfies $\frac{d^{2}h}{dx^{2}} \le 0$
under condition \eqref{(2.1.4)}, and is, therefore, concave on $(0,\frac{2}{c})$, and $g$, defined by \eqref{(2.1.6)},
being linear, is concave as well on $(\frac{2}{c},\infty)$. The function $f_{0}$, given by \eqref{(2.1.7)}, is continuous
on $0,\infty)$ and has the property that through every point of the curve $y=f_{0}(x)$ there is at least one line
which lies wholly above the curve. Indeed, for the point $x=\frac{2}{c}$, at which the second derivative of $f_{0}$ is
discontinuous, such a line is the tangent to the curve at the point. Thus, by \cite{HLP}, p. 95, $f_{0}$ is concave on
$(0,\infty)$. This example trivially generalizes to $\mathbf{R}^{d}$, by taking
\begin{equation}
\label{(2.1.13)}
\tilde{f}_{0}(x_{1}, \cdots x_{d}) = \sum_{i=1}^{d} f_{0}(x_{i})
\end{equation}
for $(x_{1}, \cdots x_{d}) \in \mathbf{R}_{+} \times \cdots \times \mathbf{R}_{+}$.
Finally, (H) obviously fails for $f_{0}$, and consequently for $\tilde{f}_{0}$, and necessity is proved.

In order to show sufficiency, assume a real-valued function $f$ satisfies \eqref{(2.1.2)} and both (Cc) and (Sp) on a convex open $X \in \mathbf{R}^{d}$ satisfying Assumption A. By (Cc),
\begin{equation}
f(\lambda x + (1-\lambda)y) \ge \lambda f(x) + (1-\lambda) f(y) \mbox{ with } x,y \in X \mbox{ and } 0 \le \lambda \le 1
\label{(2.2)}
\end{equation}   
Since $f$ satisfies \eqref{(2.2)} on an open set, it is continuous there (see \cite{HLP}, Theorem 111, p.91), and therefore \eqref{(2.2)} yields, for all $x \in X$,
\begin{eqnarray*}
\lim_{y \to (0,\cdots 0)} f(\lambda x+ (1-\lambda)y) = f(\lambda x) \ge\\
\ge \lambda f(x) + (1-\lambda) \liminf_{y \to (0,\cdots 0)} f(y) 
\end{eqnarray*}
from which, by \eqref{(2.1.2)},
\begin{equation}
\label{(2.3)}
f(\lambda x) \ge \lambda f(x)
\end{equation}
Choosing, now, $n \in \mathbf{N}$ and $\lambda = \frac{1}{n}$, we obtain from \eqref{(2.3)}
\begin{equation}
\label{(2.4)}
f(x) \le n f(\frac{x}{n}) \mbox{ for all } x \in X
\end{equation}
We further obtain from (Sp), 
\begin{equation}
\label{(2.5)}
f(nx) \ge n f(x) \mbox{ for all } x \in X
\end{equation}
Let $w \in X$ and, given $n \in \mathbf{N}$, define $x \in X$ by $nx=w$. Then, by \eqref{(2.5)},
\begin{equation}
\label{(2.6)}
f(w) \ge n f(\frac{w}{n})
\end{equation}
From \eqref{(2.4)} and \eqref{(2.6)},
\begin{equation}
\label{(2.7)}
f(w) = n f(\frac{w}{n}) \mbox{ for all } w \in X
\end{equation}
By \eqref{(2.7)}, replacing $n$ by $m$, and writing $u=\frac{w}{m}$, we find $f(mu)=mf(u) \mbox{ for all } u \in X$, and, finally,
\begin{equation}
\label{(2.8)}
f(n^{-1}mu) = n^{-1}m f(u) \mbox{ for all } u \in X \mbox{ and for all } n,m \in \mathbf{N}
\end{equation}
Take, now, any irrational number $\lambda \in \mathbf{R}_{+}$, and let $\frac{p_{k}}{q_{k}}$ be the continued fraction approximants 
of $\lambda$ (\cite{Khinchin}, p.18). By the continuity of $f$, $f(\frac{p_{k}}{q_{k}}u) \to f(\lambda u)$ as $k \to \infty$
and \eqref{(2.8)} finally yields (H).

\end{proof}

In the case $d=1$, that is, for functions $f$ of one real variable, the l.h.s. of \eqref{(2.1.3)} implies that the function $f$ 
which satisfies (H) is in fact trivial:

\begin{proposition}
\label{prop:3.1}
If $d=1$, under assumption \eqref{(2.1.2)}, \eqref{(2.1.3)} implies that the function satisfying (H) is trivial,
i.e., $f=cx$ for $c$ a given constant. The analogue of this assertion no longer holds if $d=2$.
\end{proposition}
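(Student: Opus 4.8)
The plan is to treat the two assertions separately, since both rest on a single structural fact: in dimension one every point of $X=(0,\infty)$ lies on the same ray through the origin, so (H) propagates a single value across all of $X$, whereas for $d\ge 2$ the cross-section of the cone is itself a nontrivial manifold on which a homogeneous function may vary freely.

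For the first assertion ($d=1$) I would argue that (H) by itself already forces triviality, so that the role of \eqref{(2.1.2)} and \eqref{(2.1.3)} is only to certify that such an $f$ genuinely arises in this class. Fix any $x\in(0,\infty)$ and write $x=x\cdot 1$; applying (H) with $\lambda=x$ and base point $1\in X$ gives $f(x)=x\,f(1)$, that is, $f(x)=cx$ with $c:=f(1)$. This is immediate and uses neither (Cc), (Sp), nor \eqref{(2.1.2)}, which is exactly the point: one-variable homogeneity is rigid.

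For the second assertion I would exhibit an explicit nontrivial example on the convex open cone $X=\mathbf{R}_{+}\times\mathbf{R}_{+}\subset\mathbf{R}^{2}$, which plainly satisfies Assumption A. The natural candidate is the geometric mean $f(x_{1},x_{2})=\sqrt{x_{1}x_{2}}$ (more generally the Cobb--Douglas functions $x_{1}^{\alpha}x_{2}^{1-\alpha}$, $0<\alpha<1$). I would check (H) directly from $\sqrt{(\lambda x_{1})(\lambda x_{2})}=\lambda\sqrt{x_{1}x_{2}}$; (Cc) by computing the Hessian, which is negative semidefinite, its determinant vanishing identically and its diagonal entries being negative (the vanishing determinant simply reflects linearity along rays); and \eqref{(2.1.2)} from $\sqrt{x_{1}x_{2}}\ge 0$ with limit $0$ at the origin. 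Superadditivity (Sp) then comes for free, either from the identity $f(x+y)=2f(\tfrac{x+y}{2})\ge f(x)+f(y)$ valid for any concave positively homogeneous function, or directly as a consequence of Theorem \ref{th:2.1}.

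Finally I would confirm nontriviality: the restriction $t\mapsto f(t,2-t)=\sqrt{t(2-t)}$ on $t\in(0,2)$ is strictly concave and not affine, so $f$ cannot coincide with any linear form $c_{1}x_{1}+c_{2}x_{2}$, the natural analogue of the one-variable conclusion $f=cx$. I do not expect a genuine obstacle: the $d=1$ part is one line, and the $d=2$ part is a verification. The only steps demanding care are establishing concavity of $\sqrt{x_{1}x_{2}}$ and, conceptually, making explicit why the one-variable rigidity breaks down, namely that for $d\ge 2$ the values of a homogeneous function on a cross-section of the cone are left undetermined by (H), so that (Sp) and (Cc) can hold without forcing linearity.
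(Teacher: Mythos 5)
Your proof is correct, but the one--dimensional half proceeds by a genuinely different (and more elementary) route than the paper's. The paper does not pass through (H) at all: it sets $h(x)=f(x)/x$ and argues from the \emph{hypotheses} of \eqref{(2.1.3)} directly, using the classical characterization (Hardy--Littlewood--P\'olya, Hille--Phillips) that for a concave $f$ superadditivity is equivalent to $h$ being nondecreasing, and then using the inequality $f(\lambda x)\ge\lambda f(x)$ (which follows from (Cc) together with \eqref{(2.1.2)}, as in the sufficiency part of Theorem~\ref{th:2.1}) to conclude that $h$ is also nonincreasing; hence $h$ is constant and $f=cx$. You instead observe that once (H) is in hand, one--variable homogeneity is rigid: $f(x)=f(x\cdot 1)=xf(1)$. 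That is a one--line argument and is perfectly valid, provided you make explicit that the function in question does satisfy (H) --- either because the proposition supplies it or via the sufficiency direction of Theorem~\ref{th:2.1} applied to a function obeying (Cc), (Sp) and \eqref{(2.1.2)}; your remark that \eqref{(2.1.2)} and \eqref{(2.1.3)} serve only to certify that such an $f$ arises covers this. What the paper's longer route buys is a derivation of triviality straight from (Cc), (Sp) and \eqref{(2.1.2)} without invoking the conclusion (H), together with the by-product that $f(x)/x$ is simultaneously monotone in both directions; what yours buys is brevity and a transparent explanation of \emph{why} the phenomenon is one--dimensional, namely that (H) determines $f$ on a ray from a single value, while for $d\ge 2$ it leaves $f$ free on a cross-section of the cone. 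For the $d=2$ half you are doing essentially what the paper does: the paper cites $s_{ph}(E,V)=E^{3/4}V^{1/4}$ from subsection~3.3, and your $\sqrt{x_{1}x_{2}}$ is the same Cobb--Douglas family with $\alpha=\tfrac12$; your verifications of (H), of concavity via the negative semidefinite Hessian with vanishing determinant, of (Sp) via $f(x+y)=2f\bigl(\tfrac{x+y}{2}\bigr)\ge f(x)+f(y)$, and of non-affineness along the segment $t\mapsto(t,2-t)$ are all sound.
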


\begin{proof}
Let $h(x)=\frac{f(x)}{x}$. By \cite{HLP}, Theorem 103, p.83, and \cite{HP}, p.239, under the assumed
concavity, $f$ is superadditive on $(0,\infty)$ iff $h$ is nondecreasing. Thus, the l.h.s. of \eqref{(2.1.3)}
implies that $h$ is nondecreasing. Let $x \in (0,\infty)$, $\lambda \in (0,1]$
be given. Then, by \eqref{(2.3)},
$$
f(\lambda x) \ge \lambda f(x)
$$
by the assumption \eqref{(2.1.2)}. Division by $\lambda x$ gives $\frac{f(\lambda x)}{\lambda x} \ge \frac{f(x)}{x}$. Thus $h$ is nonincreasing
on $(0,\infty)$, and, by the previous result, it must also be nondecreasing and thus be a constant 
on $(0,\infty)$, i.e., $\frac{f(x)}{x}=c$. 

For $d=2$ the example $s_{ph}(E,V)$ in subsection 3.3 provides a counterexample to the assertion of the proposition, since
$s_{ph}(E,V) \ne c_{1}E+c_{2}V$, with $c_{1},c_{2}$ given constants.

\end{proof}

\begin{remark}
\label{Remark 2.1}
If $\phi$ is a function on $[0,1]$ defined by $\phi(x)=0$ if $x \in [0,1)$, while $\phi(1)=1$, it is convex on $[0,1]$ but not continuous on $[0,1]$. Since continuity is important in in the proof of theorem ~\ref{th:2.1}, this is an indication that the assumption that the domain of the function $f$ is an open convex set $X$ in theorem ~\ref{th:2.1} is natural even in the case that the function is bounded in the closure $\overline{X}$ of $X$.
\end{remark}

\begin{remark}
\label{Remark 2.2}
The counterexample \eqref{(2.1.13)} is suggested by the entropy of a free, classical gas.
\end{remark} 

\begin{remark}
\label{Remark 2.3}
Proposition ~\ref{prop:3.1} shows that theorem ~\ref{th:2.1} is non-trivial only in the case $d \ge 2$, i.e., in the case
of functions of several variables. In the applications to statistical thermodynamics, to which we now turn, one is
typically concerned with at least two variables, as the forthcoming three examples demonstrate.
\end{remark}

\section{Applications to statistical thermodynamics}

In this section we briefly review three applications to statistical thermodynamics, on the light of theorem ~\ref{th:2.1}. In section 3.1 we revisit non-relativistic gravitational systems, following Thirring \cite{ThirrLi}. This is the most important application, in which theorem ~\ref{th:2.1} is natural, because the origin lies outside the range of physical values of the variables involved. The second one, in section 3.2, is the Kerr-Newman black-hole, and is due to Landsberg \cite{L}, but there no thermodynamic limit is involved, similarly to the third one, free photons, in section 3.3.

\subsection{Non-relativistic gravitational systems}

This application is based on the model of $N$ fermions interacting via Newtonian attractions, as represented by \eqref{(3.1.1)}. The quantum thermodynamics of this model was derived by Hertel, Narnhofer and Thirring \cite{HNT}.

In \cite{HNT}, the system of $N$ electrically neutral, massive fermions of one species, interacting by Newtonian forces, was studied as a model of a neutron star. We shall follow the excellent summary by Sewell \cite{Se} (see also \cite{NaSe} for a detailed study of the equilibrium states of the system). The Hamiltonian on a Hilbert space ${\cal H}_{N,V}$ is given by
\begin{equation}
H_{N,V} = -\frac{\hbar^{2}}{2m}\sum_{j=1}^{N} \Delta_{j}-\kappa m^{2}\sum_{j\ne k \in V;j,k=1}^{N}r_{jk}^{-1}
\label{(3.1.1)}
\end{equation}
where $\kappa$ is the gravitational constant, $\Delta_{j}$ the Laplacian for the $j-th$ particle, and $r_{j,k}$ the distance between the $jth$ and the $kth$ particle. Due to the long-range and attractive nature of the interactions, it is well explained in \cite{Se} and derived in \cite{HNT} that the statistical thermodynamics of the model may be formulated within a framework in which, for each value of $N$, the system is confined to a spatial region $\Omega_{N}$ such that the volume $V_{N}$ of $\Omega_{N}$ is proportional to $N^{-1}$. It may then be proved that if the energy $E_{N}$ of the system is such that $N^{-7/3}E_{N} \to e$ and $NV_{N} \to v$ as $N \to \infty$, the microcanonical specific entropy $N^{-1}S_{N}$ converges to a function $s$ of $(e,v)$. In order to preserve the original \emph{extensive} variables $E,V,N$, we proceed as in (\cite{ThirrLi}, p.5), defining the function
\begin{equation}
s(E,V,N) \equiv \lim_{\lambda \to \infty} \frac{1}{\lambda} S(\lambda^{-7/3}E, \lambda^{-1}V, \lambda N)
\label{(3.1.2)}
\end{equation}
where
\begin{equation}
S(E,V,N) \equiv \log \dim ({\cal H}_{N,V}^{E})
\label{(3.1.3)}
\end{equation}
and ${\cal H}_{N,V}^{E}$ denotes the subspace of ${\cal H}_{N,V}$ satisfying the condition
\begin{equation}
Tr_{{\cal H}_{N,V}^{E}} H(N,V) < E
\label{(3.1.4)}
\end{equation}  
We shall in the following consider a \emph{fixed} number $N$ of particles (as in \cite{LYPR}), and take the limit in \eqref{(3.1.2)} along $\lambda \in \mathbf{N}$, i.e., along the positive integers: we shall denote the function so defined by $s(E,V)$. This number $N$ is assumed to be an integer, and $N \ge 1$. Note that $N=0$ is a \emph{crucial} value in \cite{L}, but would be ambiguous in \eqref{(3.1.2)}.

Our first application of theorem ~\ref{th:2.1} consists in choosing there $d=2$ and $X \equiv \mathbf{R}_{-}^{E} \times \mathbf{R}_{+}^{V}$, where the superscripts refer to the variables $E$ and $V$. This choice satisfies Assumption A. Indeed, it follows from the framework just described that, and the attractive character of the interactions in \eqref{(3.1.1)}, that
\begin{equation}
-\infty < E < 0
\label{(3.2)}
\end{equation}
as well as
\begin{equation}
0 < V < \infty
\label{(3.3)}
\end{equation}
are the ranges of the variables $E$ and $V$. Further, both the point $(E,V)=(0,0)$ and the half-axes $(\mathbf{R}_{-},0)$ and $(0,\mathbf{R}_{+})$ lie in the complementary region of the physical values of the quantities $(E,V)$. The (quantum) microcanonical entropy $S(E,V,N)= \log \dim ({\cal H}_{N,V}^{E}) \ge 0$, and thus
\begin{equation}
s(E,V) \ge 0 \mbox{ if } (E,V) \in \mathbf{R}_{-} \times \mathbf{R}_{+}
\label{(3.4)}
\end{equation}
Therefore \eqref{(2.1.2)} holds for $s(E,V)$, i.e.,
\begin{equation}
\liminf_{(E,V) \to (0,0)} s(E,V) \ge 0
\label{(3.5)}
\end{equation}
By construction, $s(E,V)$ does not satisfy (H), but it does satisfy (Sp). The latter property is most easily seen to hold from the property of the inverse function $e(S,V)$, which is \emph{subadditive} (\cite{ThirrLi}, \cite{HNT}), as a consequence of the attractive nature of the interactions (see also \cite{Ru}, pp. 42,65). We thus arrive at

\begin{proposition}
\label{prop:3.1}
The function $s(E,V)$ does not satisfy (Cc), i.e., thermodynamic stability fails for model \eqref{(3.1.1)}.
\end{proposition}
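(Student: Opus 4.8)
The plan is to obtain the failure of concavity as a purely logical consequence of Theorem~\ref{th:2.1}, feeding it the three facts about $s(E,V)$ that the preceding discussion has already secured: that the boundary condition \eqref{(2.1.2)} holds, that (Sp) holds, and that (H) fails. No new analysis is needed; the proposition is a contrapositive reading of the equivalence established in the main theorem.

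First I would record that the choice $X \equiv \mathbf{R}_{-}^{E} \times \mathbf{R}_{+}^{V}$ satisfies Assumption A and that \eqref{(3.5)} establishes $\liminf_{(E,V)\to(0,0)} s(E,V) \ge 0$, which is exactly condition \eqref{(2.1.2)} for the function $s$. Theorem~\ref{th:2.1} therefore applies to $s(E,V)$, and in particular its content \eqref{(2.1.3)}, the implication $(Cc)\land(Sp)\Rightarrow(H)$, is valid for $s$.

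Next I would take the contrapositive of \eqref{(2.1.3)}, namely $\neg(H)\Rightarrow\neg(Cc)\lor\neg(Sp)$, and apply it to $s(E,V)$. Since $s$ satisfies (Sp)—this being read off from the subadditivity of the inverse function $e(S,V)$, itself a consequence of the attractive nature of the interaction in \eqref{(3.1.1)}—the disjunct $\neg(Sp)$ is false. Because $s$ fails (H) by construction (the $\lambda^{-7/3}$ and $\lambda^{-1}$ scalings in \eqref{(3.1.2)} are incompatible with the common factor $\lambda$ required by homogeneity), the contrapositive leaves only $\neg(Cc)$. Thus $s(E,V)$ is not concave, which is precisely the statement that thermodynamic stability fails for the model.

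I do not expect any genuine analytical obstacle, since all the substantive ingredients are established upstream. The only point demanding care is the \emph{direction} of the implication: one must invoke $(Cc)\land(Sp)\Rightarrow(H)$, exactly the arrow that Theorem~\ref{th:2.1} certifies under \eqref{(2.1.2)}, and not its converse, so that the contrapositive correctly isolates $\neg(Cc)$ from the knowledge that (Sp) holds while (H) does not.
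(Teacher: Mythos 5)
Your proposal is correct and is essentially identical to the paper's (implicit) argument: the paper likewise verifies Assumption A and \eqref{(2.1.2)} for $s(E,V)$, notes that (Sp) holds via subadditivity of the inverse function $e(S,V)$ while (H) fails by construction, and concludes non-(Cc) by the contrapositive of the implication $(Cc)\land(Sp)\Rightarrow(H)$ furnished by Theorem~\ref{th:2.1}. No discrepancy to report.
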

(Cc) is the condition of thermodynamic stability. The standard manifestation of \emph{non-} (Cc) is the nonpositivity of the specific heat: $s$ becomes convex with respect to $E$, leading to a phase transition of van der Waals type. For model \eqref{(3.1.1)} this was shown in \cite{HNT} (see also \cite{HT} for a soluble model). We refer to \cite{Thirr1} for the discussion of the stage in the stellar evolution in which such behavior is expected.

\begin{remark}
\label{Remark 3.1}
When the specific entropy is regarded as a function of the state, represented by a density matrix $\rho$, then $s$ is \emph{subadditive} rather than superadditive (see, e.g., \cite{Carlen}, Theorem 6.5, pg. 122, and further for a comprehensive account of various related results). In this framework, instead of continuity of $s$ , upper semicontinuity obtains instead \cite{Sewell1}. This property is crucial in the dynamic proof of the second law in \cite{therm3}.
\end{remark}

As remarked in \cite{Se}, for very large $N$ (of the order of $10^{60}$), the nonrelativistic model \eqref{(3.1.1)} becomes unphysical, because the mean particle velocities become comparable to the velocity of light. If the star's mass exceeds the Chandrasekhar limit, rigorously analysed in \cite{LYau}, it is believed that its collapse leads to a black-hole. A very nice account of black-hole thermodynamics may be found in section 4 of \cite{Se}, see also \cite{SePLA}. A special model thereof is the Kerr-Newman black-hole, considered in \cite{L}, to which we now come.

\subsection{The Kerr-Newman black-hole}

A (classical) Kerr-newman black-hole of charge $Q$, angular momentum $J$ and mass (energy) $M$ is assumed to be described by the Beckenstein-Hawking entropy $S_{BH}$, defined by (see \cite{Wald}):
\begin{equation}
S_{BH} = \pi(2M^{2}+2M\sqrt{M^{2}-a^{2}-Q^{2}}-Q^{2})
\label{(3.6)}
\end{equation}
where
\begin{equation}
a = \frac{J}{M}
\label{(3.7)}
\end{equation} 
and the inequality
\begin{equation}
M^{2} > a^{2}+Q^{2}
\label{(3.8)}
\end{equation}
is assumed. A rigorous derivation of the second law of thermodynamics for black-holes is given in \cite{SePLA}. As remarked and explained by Sewell (\cite{Se}, \cite{SePLA}), $S_{BH}$ has, at most, an information-theoretic content, not a statistical thermodynamic one. This fact is due to the idealization involved in models such as \eqref{(3.6)}-\eqref{(3.8)}, essentially the characterization of a black-hole by only three quantities, without any microstructure (Wheeler's ``no-hair theorem''). It follows from the explicit formula \eqref{(3.6)} that $S_{BH}$ does not satisfy (H). Let, for simplicity, $Q=0$, and define the variables $x=(M,J)$, with ranges $M \in (0,\infty)$ and $J \in (0,\infty)$. It may be shown that $S_{BH}$ is strictly superadditive, i.e.,
\begin{equation}
S_{BH}(x_{1}+x_{2}) > S_{BH}(x_{1}) + S_{BH}(x_{2})
\label{(3.9)}
\end{equation}
(see \cite{L}, p. 161)). Choosing $X= \mathbf{R}_{+} \times \mathbf{R}_{+}$, we see that  Theorem ~\ref{th:2.1} is (trivially) applicable to $S_{BH}(x)$, with  $S_{BH}$ continuous at $x=(0,0)$, and $S_{BH}(0,0)=0$, asserting that thermodynamic stability (Cc) must be violated. It is straightforward to show this for the Kerr-Newman black-hole, because $S_{BH}$ is twice continuously differentiable in each of its variables, and, in this case, a necessary condition for (Cc) is (see, e.g., \cite{HLP},p.81, (3.12.4))
\begin{equation}
\frac{\partial^{2}S_{BH}}{\partial M^{2}} \le 0
\label{(3.10)}
\end{equation}
We find
\begin{equation}
\frac{\partial^{2}S_{BH}}{\partial M^{2}} \ge 4
\label{(3.11)}
\end{equation}
and, therefore, thermodynamic stability indeed fails for the Kerr-Newman black-hole, as predicted by Theorem ~\ref{th:2.1}. Thus, altough $S_{BH}$ is not a true statistical thermodynamical entropy, by Sewell's previously mentioned remarks, it is, nevertheless, reassuring that it has ``inherited'' the instability properties non-(H) and non-(Cc) from the models of nonrelativistic gravitational systems.

\subsection{The free photon gas}

The entropy of the free photon gas has curious properties from the thermodynamical standpoint (see \cite{LYPR} and references given there). It is given by
\begin{equation}
s_{ph}(E,V) = E^{3/4}V^{1/4}
\label{(3.12)}
\end{equation}
defined on $\mathbf{R}_{+} \times \mathbf{R}_{+}$, with $s_{ph}$ continuous at $(0,0)$, and $s(0,0)=0$, thus satisfying \eqref{(2.1.2)} trivially. $s_{ph}$ clearly satisfies (H). Further, it is easily seen to be concave by (\cite{HLP}, p.81, (3.12.4)), but \emph{not} strictly concave, because, denoting partial derivatives by superscripts, $(s_{ph}^{EV})^{2}-s_{ph}^{EE}s_{ph}^{VV}=0$, and, by (3.12.5) of \cite{HLP}, p.81, $(s^{EV})^{2}-s^{EE}s^{VV}>0$ is necessary for strict concavity of a function $s$. As a consequence of theorem ~\ref{th:2.1}, $s_{ph}$ satisfies (Sp), a property which is (surprisingly) cumbersome to prove directly. Strict superadditivity does not, however, follow from the theorem, because of the previous remark on strict concavity.

\section{Conclusion}

As remarked by Thirring (\cite{ThirrLi}, p. 5), (H) has the interpretation of stability against implosion. On the other hand, the property of subadditivity of the inverse function $e(S,V)$ has the interpretation of stability against explosion: ``one gains energy by putting two parts together''. Due to the Newtonian attraction, therefore, strict superadditivity of $s(E,V)$ is therefore expected, which indeed holds for the model in subsection 3.1, and is even inherited by the black-hole model of section 3.2. This intuition is, of course, not applicable to the free photon gas of subsection 3.3, and, indeed, as remarked there, strict superadditivity is not a direct consequence of theorem ~\ref{th:2.1}. 

Another indication that subadditivity of the energy may be the general property responsible for the equivalence between stability in the sense of (H) and thermodynamic stability is, as remarked by Thirring in \cite{ThirrLi}, the universal character of van der Waals forces (which are attractive) for neutral assemblies of atoms or molecules \cite{LT}.

The main general obstruction to the validity of (2) in applications to statistical thermodynamics (with $f$ taken as the entropy function) is posed by \emph{classical} statistical mechanics: the classical entropy is unbounded from below \cite{RR} (see also remark ~\ref{Remark 2.1}). This adds a further link to the importance of quantum theory in various stability and instability aspects of cosmic bodies.

\section{Acknowledgements}

I am very much indebted to Geoffrey Sewell for pointing out an error in a previous version of this manuscript, and thank him, as well as Pedro L. Ribeiro, for their interest. Special thanks are due to Elliott Lieb for his kind remarks on this paper, as well as a correction. I should like to thank the referee for useful remarks, as well as for his positive appraisal of the present paper.


\begin{thebibliography}{HNT72}

\bibitem[Bru62]{Bru}
A.~M. Bruckner.
\newblock Tests for the superadditivity of functions.
\newblock {\em Proc. Am. Math. Soc.}, 13(1):126, 1972.

\bibitem[Car10]{Carlen}
E.~Carlen.
\newblock Trace inequalities and quantum entropy: an introductory course.
\newblock {\em Contemp. Math.}, 529:73, 2010.

\bibitem[Fen53]{Fenchel}
W.~Fenchel.
\newblock Convex cones, sets and functions.
\newblock Lectures at Princeton University spring 1951, Princeton University, Department of Mathematics, 1953.

\bibitem[HLP52]{HLP}
G.H. Hardy, J.E. Littlewood, and G.~P\'{o}lya.
\newblock {\em Inequalities}.
\newblock Cambridge University Press, 1934.

\bibitem[HNT72]{HNT}
P.~Hertel, H.~Narnhofer, and W.~Thirring.
\newblock Thermodynamic functions for fermions with gravostatic and
  electrostatic interactions.
\newblock {\em Comm. Math. Phys.}, 28:159--176, 1972.

\bibitem[HP57]{HP}
E.~Hille and R.~S.Phillips.
\newblock {\em Functional analysis and semigroups}.
\newblock American Mathematical Society Coll. Publ.,XXXI, Providence, 1957.

\bibitem[HT71]{HT}
P.~Hertel and W.~Thirring.
\newblock A soluble model for a system with negative specific heat.
\newblock {\em Ann. Phys.}, 63:345, 1971.

\bibitem[Khi97]{Khinchin}
A.~I. Khinchin.
\newblock {\em Continued fractions}.
\newblock Dover Publ. Inc., 1997.

\bibitem[Lan84]{L}
P.T. Landsberg.
\newblock Is equilibrium always an entropy maximum?
\newblock {\em J. Stat. Phys.}, 35:159, 1984.

\bibitem[LT86]{LT}
E.~H. Lieb and W.~Thirring.
\newblock Universal nature of the van der {W}aals forces for {C}oulomb systems.
\newblock {\em Phys. Rev. A}, 34:40, 1986.

\bibitem[LY87]{LYau}
E.~Lieb and H.T. Yau.
\newblock A rigorous examination of the {C}handrasekhar theory of stellar
  collapse.
\newblock {\em Astroph. J.}, 323:140--144, 1987.

\bibitem[LY99]{LYPR}
E.~H. Lieb and J.~Yngvason.
\newblock The physics and mathematics of the second law of thermodynamics.
\newblock {\em Phys. Rep.}, 310:1--96, 1999.

\bibitem[LY14]{LYEM}
E.~H. Lieb and J.~Yngvason.
\newblock Entropy meters and the entropy of non-extensive systems.
\newblock {\em Proc. Roy. Soc. A}, 470:20140192, 2014.

\bibitem[NS80]{NaSe}
H.~Narnhofer and G.~L. Sewell.
\newblock Equilibrium states of gravitational systems.
\newblock {\em Comm. Math. Phys.}, 71:1--28, 1980.

\bibitem[Ros50]{Ros}
R.~A. Rosenbaum.
\newblock Subadditive functions.
\newblock {\em Duke Math. J.}, 17:227, 1950.

\bibitem[RR67]{RR}
D.~W. Robinson and D.~Ruelle.
\newblock Mean entropy of states in classical statistical mechanics.
\newblock {\em Comm. Math. Phys.}, 5:288, 1967.

\bibitem[Rue69]{Ru}
D.~Ruelle.
\newblock {\em Statistical Mechanics - Rigorous Results}.
\newblock W. A. Benjamin Inc., 1969.

\bibitem[Sew]{SePLA}
G.~L. Sewell.
\newblock On the generalised second law of thermodynamics.
\newblock {\em Phys. Lett. A}, 122:309, 1987; Erratum, Phys. Lett. A123, 499 (1987).

\bibitem[Sew86]{Sewell1}
G.~L. Sewell.
\newblock {\em Quantum theory of collective phenomena}.
\newblock Oxford University press, 1986.

\bibitem[Shi96]{Se}
G.~L. Sewell.
\newblock {Developments in normal and gravitational thermodynamics}.
\newblock {\em Entropy and entropy generation, edited by J. Shiner}.
\newblock Kluwer, 1996.

\bibitem[Thi97]{ThirrLi}
W.~Thirring.
\newblock {Introduction}.
\newblock {\em The stability of matter: from atoms to stars, {S}electa of
  {E}lliott {H}.{L}ieb, second edition, edited by W. Thirring}.
\newblock Springer Verlag, 1997.

\bibitem[Thi03]{Thirr1}
W.~Thirring.
\newblock {\em Quantum mathematical physics}.
\newblock Springer Verlag, 2003.

\bibitem[UUG15]{UUG}
M.~Uchiyama, A.~Uchiyama and M.~Giga.
\newblock Superadditivity and derivative of operator functions.
\newblock {\em Linear Algebra and its Appl.}, 465:401, 2015.

\bibitem[Wal94]{Wald}
R.~M. Wald.
\newblock {\em Quantum field theory in curved space-time and black hole
  thermodynamics}.
\newblock The University of Chicago Press, 1994.

\bibitem[Wre20]{therm3}
W.~F. Wreszinski.
\newblock Irreversibility, the time arrow and a dynamical proof of the second
  law of thermodynamics.
\newblock {\em Quantum Studies: Mathematics and Foundations}, 7:125--136, 2020.

\end{thebibliography}
\end{document}